\newenvironment{proof}
{\pagebreak[1]{\narrower\noindent {\bf Proof:\quad\nopagebreak}}}{\QED}
\newcommand{\ang}[1]{\langle#1\rangle}
\newcommand{\xvec}[1]{\ifcase 3{#1} {\ang {x_1,x_2,x_3} } \else
\ifcase 4{#1} {\ang{x_1,x_2,x_3,x_4}} \else {\ang {x_1,\ldots,x_{#1}}}\fi\fi}
\newcommand{\yvec}[1]{\ifcase 3{#1} {\ang {y_1,y_2,y_3} } \else
\ifcase 4{#1} {\ang{y_1,y_2,y_3,y_4}} \else {\ang {y_1,\ldots,y_{#1}}}\fi\fi}
\newcommand{\zvec}[1]{\ifcase 3{#1} {\ang {z_1,z_2,z_3} } \else
\ifcase 4{#1} {\ang{z_1,z_2,z_3,z_4}} \else {\ang {z_1,\ldots,z_{#1}}}\fi\fi}
\newcommand{\vecc}[2]{\ifcase 3{#2} {\ang { {#1}_1,{#1}_2,{#1}_3 } } \else
\ifcase 4{#1} {\ang { {#1}_1,{#1}_2,{#1}_3,{#1}_{4} } }
\else {\ang { {#1}_1,\ldots,{#1}_{#2}}}\fi\fi}
\newcommand{\veccd}[3]{\ifcase 3{#2} {\ang { {#1}_{{#3}1},{#1}_{{#3}2},{#1}_{{#3}3} } } \else
\ifcase 4{#1} {\ang { {#1}_{{#3}1},{#1}_{{#3}2},{#1}_{#3}3},{#1}_{{#3}4} }
\else {\ang { {#1}_{{#3}1},\ldots,{#1}_{{#3}{#2}}}}\fi\fi}
\newcommand{\veccz}[2]{\ifcase 3{#2} {\ang { {#1}_0,{#1}_2,{#1}_3 } } \else
\ifcase 4{#1} {\ang { {#1}_0,{#1}_2,{#1}_3,{#1}_{4} } }
\else {\ang { {#1}_0,\ldots,{#1}_{#2}}}\fi\fi}
\newcommand{\xve}[1]{\ifcase 3{#1} {x_1,x_2,x_3} \else
\ifcase 4{#1} {x_1,x_2,x_3,x_4} \else {x_1,\ldots,x_{#1}}\fi\fi}
\newcommand{\yve}[1]{\ifcase 3{#1} {y_1,y_2,y_3} \else
\ifcase 4{#1} {y_1,y_2,y_3,y_4} \else {y_1,\ldots,y_{#1}}\fi\fi}
\newcommand{\zve}[1]{\ifcase 3{#1} {z_1,z_2,z_3} \else
\ifcase 4{#1} {z_1,z_2,z_3,z_4} \else {z_1,\ldots,z_{#1}}\fi\fi}
\newcommand{\ve}[2]{\ifcase 3#2 {{#1}_1,{#1}_2,{#1}_3} \else
\ifcase 4#2 {{#1}_1,{#1}_2,{#1}_3,{#1}_{4}}
\else {{#1}_1,\ldots,{#1}_{#2}}\fi\fi}
\newcommand{\ved}[3]{\ifcase 3#2 {{#1}_{{#3}1},{#1}_{{#3}2},{#1}_{{#3}3}} \else
\ifcase 4#2 {{#1}_{{#3}1},{#1}_{{#3}2},{#1}_{{#3}3},{#1}_{{#3}4}}
\else {{#1}_{{#3}1},\ldots,{#1}_{{#3}{#2}}}\fi\fi}
\newcommand{\fuve}[3]{
\ifcase 3#2
{{#3}({#1}_1),{#3}({#1}_2,{#3}({#1}_3)} \else
\ifcase 4#2
{{#3}({#1}_1),{#3}({#1}_2),{#3}({#1}_3),{#3}({#1}_4)}
\else
{{#3}({#1}_1),\ldots,{#3}({#1}_{#2})}\fi\fi}
\newcommand{\setmathchar}[1]{\ifmmode#1\else$#1$\fi}
\newcommand{\vlist}[2]{%
	\setmathchar{%
% 		\typeout{Processing <#1><#2>}
		\compound#2\one{#2}\two
		\ifcompound
% 			\typeout{Is COMPOUND}
			({#1}_1,\ldots,{#1}_{#2})
		\else
			\ifcat N#2
				({#1}_1,\ldots,{#1}_{#2})
			\else
				\ifcase#2
					({#1}_0)\or
					({#1}_1)\or
					({#1}_1,{#1}_2)\or
					({#1}_1,{#1}_2,{#1}_3)\or
					({#1}_1,{#1}_2,{#1}_3,{#1}_4)\else
% 					\typeout{In ELSE of the IFCASE}
					({#1}_1,\ldots,{#1}_{#2})
				\fi
			\fi
		\fi}}
\newif\ifcompound
\def\compound#1\one#2\two{%
	\def\one{#1}
	\def\two{#2}
	\if\one\two
		\compoundfalse
	\else
		\compoundtrue
	\fi}
\newcommand{\xwe}[1]{\ifcase 3{#1} {x_1\wedge x_2\wedge x_3} \else
\ifcase 4{#1} {x_1\wedge x_2\wedge x_3\wedge x_4} \else {x_1\wedge\citedots \wedge
x_{#1}}\fi\fi}
\newcommand{\we}[2]{\ifcase 3#2 {\ang { {#1}_1\wedge {#1}_2\wedge {#1}_3 } } \else
\ifcase 4{#1} {\ang { {#1}_1\wedge {#1}_2\wedge {#1}_3\wedge {#1}_{4} } }
\else {\ang { {#1}_1\wedge\citedots\wedge {#1}_{#2}}}\fi\fi}
\newcommand{\s}[1]{\s_{#1}}
\newcommand{\monus}{\;\raise.5ex\hbox{{${\buildrel
    \ldotp\over{\hbox to 6pt{\hrulefill}}}$}}\;}
\newcounter{savenumi}
\newtheorem{theoremfoo}{Theorem}[section] %by chapter in report style
\newenvironment{theorem}{\pagebreak[1]\begin{theoremfoo}}{\end{theoremfoo}}
\newtheorem{lemmafoo}[theoremfoo]{Lemma}
\newenvironment{lemma}{\pagebreak[1]\begin{lemmafoo}}{\end{lemmafoo}}
\newtheorem{conjecturefoo}[theoremfoo]{Conjecture}
\newtheorem{conventionfoo}[theoremfoo]{Convention}
\newtheorem{porismfoo}[theoremfoo]{Porism}
\newtheorem{gamefoo}[theoremfoo]{Game}
\newtheorem{corollaryfoo}[theoremfoo]{Corollary}
\newtheorem{claimfoo}[theoremfoo]{Claim}
\newtheorem{openfoo}[theoremfoo]{Open Problem}
\newtheorem{exercisefoo}{Exercise}
\newcommand{\fig}[1] %usage:\fig{file}
{
 \begin{figure}
 \begin{center}
 \input{#1}
 \end{center}
 \end{figure}
}
\newtheorem{potanafoo}[theoremfoo]{Potential Analogue}
\newtheorem{notefoo}[theoremfoo]{Note}
\newtheorem{notabenefoo}[theoremfoo]{Nota Bene}
\newtheorem{nttn}[theoremfoo]{Notation}
\newtheorem{empttn}[theoremfoo]{Empirical Note}
\newtheorem{examfoo}[theoremfoo]{Example}
\newtheorem{dfntn}[theoremfoo]{Definition}
\newtheorem{propositionfoo}[theoremfoo]{Proposition}
\newcommand{\yyskip}{\penalty-50\vskip 5pt plus 3pt minus 2pt}
\newcommand{\blackslug}{\hbox{\hskip 1pt
        \vrule width 4pt height 8pt depth 1.5pt\hskip 1pt}}
\newcommand{\QED}{{\penalty10000\parindent 0pt\penalty10000
        \hskip 8 pt\nolinebreak\blackslug\hfill\lower 8.5pt\null}
        \par\yyskip\pagebreak[1]}
\newcommand{\BBB}{{\penalty10000\parindent 0pt\penalty10000
        \hskip 8 pt\nolinebreak\hbox{\ }\hfill\lower 8.5pt\null}
        \par\yyskip\pagebreak[1]}
\newtheorem{factfoo}[theoremfoo]{Fact}
\newenvironment{block}{\begin{list}{\hbox{}}{\leftmargin 1em
    \itemindent -1em \topsep 0pt \itemsep 0pt \partopsep 0pt}}{\end{list}}
\title{Ruling Out Short Proofs of Unprovable Sentences is Hard}
\author{Hunter Monroe}
\date{\today}
\begin{document}
\maketitle
\begin{abstract}
If no optimal propositional proof system exists, we (and independently Pudlák) prove that ruling out length $t$ proofs of any unprovable sentence is hard. This mapping from unprovable to hard-to-prove sentences powerfully translates facts about noncomputability into complexity theory. For instance, because proving string $x$ is Kolmogorov random ($x{\in}R$) is typically impossible, it is typically hard to prove ``no length $t$ proof shows $x{\in}R$'', or tautologies encoding this. Therefore, a proof system with one family of hard tautologies has these densely in an enumeration of families. The assumption also implies that a natural language is $\textbf{NP}$-intermediate: with $R$ redefined to have a sparse complement, the complement of the language $\{\langle x,1^t\rangle|$ no length $t$ proof exists of  $x{\in}R\}$ is also sparse. 

Efficiently ruling out length $t$ proofs of $x{\in}R$ might violate the constraint on using the fact of $x{\in}R$'s unprovability. We conjecture: any computable predicate on $R$ that might be used in if-then statements (or case-based proofs) does no better than branching at random, because $R$ appears random by any effective test. This constraint could also inhibit the usefulness in circuits and propositional proofs of NOT gates and cancellation---needed to encode if-then statements. If $R$ defeats if-then logic, exhaustive search is necessary.
\end{abstract}
\section{Introduction}
We prove a deep linkage between noncomputability and complexity under a widely believed conjecture---that there is no optimal propositional proof system for tautologies.\footnote{This paper was prepared in honor of past and present faculty of Davidson College, including Hansford Epes, L. Richardson King, Benjamin Klein, and Clark Ross. Comments are appreciated from Pavel Pudlák and Bill Gasarch. The ideas in this paper and earlier versions have benefited from discussions with the following: Scott Aaronson, Eric Allender, Olaf Beyersdorff, Ilario Bonacina, Maria Luisa Bonet, Cristian Calude, Marco Carmosino, Yuval Filmus, Vijay Ganesh, Bill Gasarch, Valentina Harizonov, Pavel Hrubeš, Rahul Ilango, Russell Impagliazzo, Valentine Kabanets, Mehmet Kayaalp, Yanyi Liu, Ian Mertz, Daniel Monroe, Igor Oliveira, Toniann Pitassi, Hanlin Ren, Rahul Santhanam, Till Tantau, Neil Thapen, Luca Trevisan, Avi Wigderson, Ryan Williams, Marius Zimand, and other participants in seminars at George Washington University and Davidson College, the Simons Institute 2023 Meta-Complexity Program, the Computational Complexity Conference 2022, the Workshop on Proof Complexity 2022, and the Conference on Complexity with a Human Face 2022. Remaining errors are my own.} That conjecture originated as an assertion that a noncomputability result also holds with a resource bound. Gödel’s Second Incompleteness Theorem states that no consistent sufficiently powerful theory can prove its own consistency. Pudlák\cite{Pudlak1986length} and Friedman independently formulated a feasible consistency conjecture: it is hard to rule out any length $t$ proof in a theory of its own inconsistency.\footnote{See Pudlák\cite{PudlakLogicalFoundations} Section 6.4 and \cite{PudlakFiniteDomain}. Pudlák\cite{Pudlak1986length} shows the initial conjecture was incorrect---a theory $\mathcal{T}$ can efficiently prove that $\mathcal{T}$ lacks a length $t$ proof of `$0{=}1$'. The 1989 reformulation refers to the lack of efficient proofs in a weaker theory. See also Theorem 59 of Pudlák\cite{PudlakLogicalFoundations}.}  Kraj\'{\i}\v{c}ek and Pudlák\cite{Krajicek} proved the lack of efficient proofs (in a weaker theory) of inconsistency  is equivalent to the nonexistence of an optimal proof system, which remains a key conjecture in proof complexity theory.\footnote{See also Kraj\'{\i}\v{c}ek\cite{Krajicekproof} Section 21.3.} 

We show: if it is possible to efficiently rule out length $t$ proofs of some unprovable sentence $\phi$, it is also possible to efficiently rule out a slightly shorter proof of inconsistency, which could be used in a length $t$ proof of $\phi$ by contradiction. This implies a powerful generalization---if it is hard to rule out length $t$ proofs of inconsistency, it is hard to rule of length $t$ proofs of any unprovable sentence. This in turn implies that facts about unprovability and noncomputability, which are well understood, can be imported into complexity theory. This has wide ramifications---diverse types of unprovable sentences translate into assertions that open questions in complexity theory have the expected answers. For instance, unprovable sentences of the form $x{\in}R$ are dense, so hard families of tautologies encoding ``no length $t$ proof shows $x{\in}R$'' are also dense. With $R$ redefined to have a sparse complement---a string is in $R$ unless exponentially compressible---the complement of the language $\{\langle x,1^t\rangle|$ no length $t$ proof exists of  $x{\in}R\}$ is neither in $\textbf{P}$ nor $\textbf{NP}$-complete, but is $\textbf{NP}$-intermediate. 

The hardness of ruling out length $t$ proofs of any unprovable sentence implies a deep linkage between noncomputability and complexity. We show that the implicit mapping from unprovable sentences to families of hard-to-prove sentences in a theory is an isomorphism. This would be a significant previously unnoticed structural feature of theories such as ZFC.  

Formalizing the intuition ``ruling out length $t$ proofs is hard'' requires specifying which theory lacks length $t$ proofs and which theory has difficulty ruling them out. These theories must be different, as a theory that proves it lacks short proofs of some $\phi$ would prove its own consistency. Our main result is:

\begin{theorem}\label{theoremmain}
The following are equivalent:\footnote{Monroe\cite{MonroeTCS} shows another equivalent condition: For any $M$ accepting $\texttt{coBHP}=\{\langle N,y,1^t\rangle|$ there is no accepting path of nondeterministic TM (NTM) $N$ on input $y$ with $t$ or fewer steps$\}$, there exists some $\langle N',y'\rangle$ where $N'$ does not halt on $y'$ such that $\langle N',y',1^t\rangle$ is a hard family of inputs.}

(i) No optimal propositional proof system exists.

(ii) For consistent theory $\mathcal{S}$, for some stronger theory $\mathcal{T}$, $\mathcal{S}$ cannot efficiently rule out length $t$ proofs in $\mathcal{T}$ of $0{=}1$ (that is,  $\mathcal{S}\,{\centernot{\sststile{}{t^{\mathcal{O}(1)}}}}\mathcal{T}\,{\centernot{\sststile{}{t}}}`0{=}1$'). 

(iii) For the $\mathcal{S}$ and $\mathcal{T}$ in (ii) and for any $\phi$ unprovable in $\mathcal{T}$, $\mathcal{S}$ cannot efficiently rule out length $t$ proofs in $\mathcal{T}$ of $\phi$ (that is, $\mathcal{T}\,{\centernot{\sststile{}{}}}\phi$ implies $\mathcal{S}\,{\centernot{\sststile{}{t^{\mathcal{O}(1)}}}}\mathcal{T}\,{\centernot{\sststile{}{t}}}\phi$).\footnote{This conjecture was formulated by the author and proved independently by Pudlák in general and by the author for sentences $x{\in}R$.}
\end{theorem}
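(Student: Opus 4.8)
The plan is to prove the new equivalence (ii) $\Leftrightarrow$ (iii) directly, taking (i) $\Leftrightarrow$ (ii) from the Kraj\'{\i}\v{c}ek--Pudl\'ak characterization \cite{Krajicek} of the nonexistence of an optimal propositional proof system in terms of the unprovability, in a weaker theory, of the finite-consistency statements of a stronger one; I would only need to check that the formulations line up, in particular the quantifier ``for some stronger $\mathcal{T}$'' and the reading of ``efficiently rule out'' as ``has proofs of length $t^{\mathcal{O}(1)}$''. The direction (iii) $\Rightarrow$ (ii) is then immediate: $0{=}1$ is unprovable in the consistent theory $\mathcal{T}$, so (iii) applied with $\phi$ taken to be $0{=}1$ is exactly (ii).

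For (ii) $\Rightarrow$ (iii) I would argue by contraposition, reducing the task of ruling out length $t$ proofs of $0{=}1$ to that of ruling out slightly longer proofs of $\phi$. Fix $\mathcal{S} \subsetneq \mathcal{T}$ as in (ii) and suppose, for contradiction, that for some $\phi$ with $\mathcal{T} \not\vdash \phi$ the theory $\mathcal{S}$ has proofs of length $t^{\mathcal{O}(1)}$ of the statements $\psi_t$ asserting ``$\mathcal{T}$ has no proof of $\phi$ using at most $t$ lines''. The crux is an elementary proof transformation: there is a constant $c = c_\phi$, depending only on $\phi$ and the proof calculus of $\mathcal{T}$, such that every $\mathcal{T}$-proof of $0{=}1$ with $n$ lines extends to a $\mathcal{T}$-proof of $\phi$ with at most $n+c$ lines---append the fixed $\mathcal{T}$-derivation of $0{\ne}1$, form the contradiction, and derive $\phi$ by ex falso. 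Hence if $\mathcal{T}$ has no proof of $\phi$ with at most $t+c$ lines then it has no proof of $0{=}1$ with at most $t$ lines; and this implication is a single finite syntactic fact, so $\mathcal{S}$ proves it by an $\mathcal{S}$-proof whose length depends only on $\phi$, not on $t$. Composing the assumed $\mathcal{S}$-proof of $\psi_{t+c}$, of length $(t+c)^{\mathcal{O}(1)} = t^{\mathcal{O}(1)}$, with this fixed-length $\mathcal{S}$-proof of the implication and one modus ponens yields an $\mathcal{S}$-proof of length $t^{\mathcal{O}(1)}$ that $\mathcal{T}$ has no proof of $0{=}1$ with at most $t$ lines, contradicting (ii). This establishes (iii).

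I expect the main obstacle to be the last formalization step: verifying that the transformation fact ``a $\mathcal{T}$-proof of $0{=}1$ with $n$ lines yields a $\mathcal{T}$-proof of $\phi$ with $n+c$ lines'' is provable in $\mathcal{S}$ by proofs whose length does not depend on $n$---equivalently, that its universal closure over $n$ has a single short $\mathcal{S}$-proof. This forces one to fix mutually compatible G\"odel numberings of $\mathcal{T}$-proofs, to require that $\mathcal{S}$ carry out a modicum of syntactic reasoning about the proof predicate of $\mathcal{T}$, and to use that $\mathcal{T}$ proves $0{\ne}1$ by a fixed short proof---which is exactly the point at which the specific pair $\mathcal{S} \subsetneq \mathcal{T}$ furnished by (ii) matters. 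Everything else is routine proof-length bookkeeping, using that $c_\phi$ is a constant once $\phi$ is fixed. For the special case $\phi = (x {\in} R)$ one can instead argue directly, without invoking (ii), from the density of Kolmogorov-incompressible strings together with the unprovability of $x {\in} R$, along the lines indicated in the introduction.
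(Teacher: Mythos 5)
Your proposal is correct and, for the substantive direction (ii)$\Rightarrow$(iii), it is essentially the paper's argument: a $\mathcal{T}$-proof of $0{=}1$ of length $n$ can be padded by a constant $c_\phi$ (ex falso) into a $\mathcal{T}$-proof of $\phi$, so short $\mathcal{S}$-proofs ruling out length $t{+}c$ proofs of $\phi$ yield short $\mathcal{S}$-proofs ruling out length $t$ proofs of $0{=}1$, contradicting (ii); the paper compresses this into the phrase ``slightly shorter than $t$,'' whereas you make explicit the real crux, namely that the transformation (its universal closure over $n$) must itself have a single short $\mathcal{S}$-proof, with only an $O(\log t)$ instantiation cost---a point the paper glosses over entirely. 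Where you genuinely diverge is in closing the equivalence: the paper returns to (i) via the external citation (iii)$\Rightarrow$(i) of Chen et al.\ \cite{ChenFlumMuller}, while you observe that (iii)$\Rightarrow$(ii) is immediate by specializing $\phi$ to $0{=}1$ (legitimate under the intended reading, since $\mathcal{T}$ is taken consistent so $0{=}1$ is unprovable) and then reuse the Kraj\'{\i}\v{c}ek--Pudl\'ak equivalence (i)$\Leftrightarrow$(ii) \cite{Krajicek}; this is simpler and more self-contained, at the cost of leaning entirely on the informal ``the $\mathcal{S}$ and $\mathcal{T}$ in (ii)'' phrasing of (iii). One caution: your closing aside that the special case $\phi=`x{\in}R$' can be handled ``without invoking (ii)'' cannot be right if read literally---an unconditional superpolynomial lower bound for $\mathcal{S}$ on those instances is far beyond what density of $R$ plus Chaitin-style unprovability gives; the paper's treatment of $x{\in}R$ is still conditional on (ii)/(i). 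Since that remark is not load-bearing, the proof stands without it.
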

\begin{proof}
(i) and (ii) are equivalent by Kraj\'{\i}\v{c}ek and Pudlák\cite{Krajicek}.

(ii)$\rightarrow$(iii) Suppose (ii) holds. If $\mathcal{T}$ lacks a length $t$ proof of $\phi$, there can be no proof of $0{=}1$ slightly shorter than $t$, as that would yield a length $t$ proof by contradiction of $\phi$. Therefore, if $\mathcal{S}$ efficiently proves that $\mathcal{T}$ lacks a length $t$ proof of $\phi$, contrary to (iii), it also efficiently proves that there can be no proof of $0{=}1$ slightly shorter than $t$. This contradicts (ii). Therefore, (ii) implies (iii). If $\phi$ is provable in $\mathcal{T}$, it is provable within some length $t$, so (iii) cannot hold, as $\mathcal{S}$ is consistent and cannot show $\mathcal{T}$ lacks a proof of that length.

(iii)$\rightarrow$(i) Chen et al\cite{ChenFlumMuller}.
\end{proof}

In the notation above in parentheses, write $\mathcal{T}\,{\sststile{}{}}\phi$ or $\mathcal{T}\,{\centernot{\sststile{}{}}}\phi$ respectively if $\mathcal{T}$ does or does not have a proof of $\phi$ of any length respectively. Write $\mathcal{T}\,{\sststile{}{t}}\phi$ if theory $\mathcal{T}$ has a length $t$ (or shorter) proof of sentence $\phi$ and $\mathcal{T}\,{\centernot{\sststile{}{t}}}\phi$ if not, where proof length is the number of symbols in the binary string representing the proof.\footnote{See Pudlák\cite{PudlakLengthsOfProofs}'s survey on proof length.} Likewise, $\mathcal{T}\,{\centernot{\sststile{}{t^{\mathcal{O}(1)}}}}\phi$ signifies that $\mathcal{T}$ does not have an efficient (polynomially bounded) proof of $\phi$. $\mathcal{T}\,{\sststile{}{}}\phi$ and $\mathcal{T}\,{{\sststile{}{\mathcal{O}(1)}}}\phi$ are equivalent; a provable sentence has a finite proof and is therefore provable within a constant bound.

Below, we will show that the nonexistence of an optimal proof system implies various complexity theory conjectures, by identifying some set of unprovable sentences and invoking Theorem \ref{theoremmain}(iii). In many cases, we choose unprovable sentences stating that a string $x$ is Kolmogorov random (written $x{\in}R$), that is, $x$ is incompressible by half, with no short description in the form of a program that prints $x$.\footnote{The definition in terms of incompressibility by half is arbitrary, except for Theorem \ref{thmnpintermediate} which requires logarithmic incompressibility.} Chaitin's Incompleteness Theorem states that proving $x{\in}R$ is typically impossible in a theory with a computably enumerable (c.e.) set of theorems. Otherwise, ``the first length $n$ string that provably has no short description'' would itself be a short description of some string, which is a contradiction. Here, $x{\in}R$ is an arithmetic sentence encoding that a string $x$ (represented as a binary number) lacks a short description.  Because, $R$ is dense and noncomputable, the set of $x{\in}R$ provides a dense nonconstructive pool of unprovable sentences.\footnote{For an overview of Kolmogorov complexity, see Li and Vitanyi\cite{LiVitanyiBook}. There is a rapidly growing recent literature on meta-complexity; see Santhanam\cite{SanthanamMetaComplexity}.}

More formally, define the set of Kolmogorov random strings as $R{=}\{x|\forall p{:}$ if $|p|{\leq}|x|/2$, then $p{\nearrow}$ or $p{\downarrow}$ with $U(p){\neq}x\}$, with $U$ a deterministic universal TM with no limit on its running time (not necessarily prefix free), $x$ and $p$ binary strings with $|x|$ denoting $x$'s length, $p{\downarrow}$ and $p{\nearrow}$ signifying program $p$ does or does not halt, and `$x{\in}R$' is an arithmetic sentence encoding $x{\in}R$. Single and double quotes signify a sentence, a sequence of symbols, encoding a mathematical statement.

If there is no optimal proof system, there are strong implications by Theorem \ref{theoremmain}:
\begin{itemize}\setlength\itemsep{-0.4em}
\item Ruling out length $t$ proofs is hard on with positive density, because unprovable sentences $x{\in}R$ have positive density. Equivalently, proving tautologies encoding ``there is no length $t$ proof of $x{\in}R$'' is hard with positive density. There is no optimal proof system for tautologies, with dense set of hard $\textbf{P}$-uniform families witnessing the nonoptimality.

\item A natural language is $\textbf{NP}$-intermediate: the sparse complement of the language ``$x{\in}R$ lacks a length $t$ proof'' (where $R$ is redefined, by requiring logarithmic incompressibility, to have a sparse complement). This language is not in $\textbf{P}$ but has $\textbf{P/poly}$ circuits.

\item The implicit mapping from unprovable to hard-to-prove sentences is an isomorphism. However, it is incomplete---for instance, stronger conjectures are required to imply that the polynomial hierarchy ($\textbf{PH}$) does not collapse---and substantial work may be needed to identify conjectures related to other open complexity questions and the associated isomorphisms.
\end{itemize}

The paper is organized as follows. Section \ref{preliminariessection} provides preliminaries. Section \ref{sectioncaludejurgensen} shows that unprovable sentences `$x{\in}R$' are dense among length $n$ sentences. Section \ref{sectiontautologies} discusses implications for tautologies and proof systems. Section \ref{sectionintermediate} shows that a natural language is $\textbf{NP}$-intermediate. Section \ref{sectionisomorphism} shows that the mapping from unprovable to hard-to-prove sentences is an isomorphism and discusses open questions. Section \ref{conclusion} concludes.

\section{Preliminaries}\label{preliminariessection}
\emph{Strings}: With a binary alphabet $\{0,1\}$, let $S^n$ be the set of length $n$ strings, which are ordered $n$-tuples. Let $|x|$ be the length of a string and $|S|$ be the cardinality of set $S$. A language $L$ is a subset of $\cup_{n\geq 0}S^n$.

\emph{Density}: Say the share of length $n$ strings in $L$ is bounded above zero if there exists $c>0$ such that $|L\cap S^n|/n\geq c$  for sufficiently large $n$. This implies the weaker condition that $L$ has positive upper density, i.e., that $\displaystyle\limsup_{n \rightarrow\infty}\frac{|L\cap\{1,2,\ldots,n\}|}{n}>0$. If an event depending on $n$ occurs with probability that tends to one as $n$ tends to infinity, such as $x{\in}R$ where $|x|{=}n$, say that it occurs with high probability (w.h.p.).

\emph{Theories}: Theories are assumed to be the Peano arithmetic (PA) or an extension of PA.\footnote{The conjecture could coherently refer to a weaker theory such as Robinson's Q without induction or unbounded quantifiers, which can still prove `$p{\downarrow}$' if in fact $p{\downarrow}$, by verifying the transcript of a halting computation.} To allow for average-case analysis, the standard definition of PA is modified so binary strings are encoded in arithmetic sentences as natural numbers, in binary not unary, adding a leading ``1'' to avoid losing leading zeros.

\emph{Proof Systems}: A propositional proof system is a polynomial time function $h\in \textbf{FP}$ with range $\texttt{TAUT}$ (Cook and Reckhow\cite{CookReckhow}). For tautology $\tau$, any string $w$ such that $h(w)=\tau$ is a proof of $\tau$. The proof system $h$ is \emph{optimal} if there exists $c\geq 1$ such that the length of minimal $f$ proofs of $x$ are polynomially bounded in $|x|$ with exponent $c$ by minimal $h$ proofs (Kraj\'{\i}\v{c}ek and Pudlák\cite{Krajicek}). A proof system is not optimal if and only if there is a $\textbf{P}$-uniform family of tautologies for which it requires superpolynomial proof length.

\section{Density of Unprovable Sentences}\label{sectioncaludejurgensen}
Calude and J{\"u}rgensen\cite{CaludeJurgensen} show that the share of length $n$ arithmetic sentences that are true and unprovable is bounded above zero. The result relies on two facts: `$x{\in}R$' is typically unprovable, and length $n$ strings are in $R$ w.h.p.\footnote{See the proof of Theorem 5.2 in \cite{CaludeJurgensen}.} With that context, Theorem \ref{theoremmain} implies that a similar result holds for $\texttt{coTHEOREMS}_{\leq t}{=}$ $\{\langle \phi,1^t\rangle|\mathcal{T}\,{\centernot{\sststile{}{t}}}\phi\}$.

Chaitin’s Incompleteness Theorem states:
\begin{theorem}\label{chaitinthm}
	For every consistent, arithmetically sound theory $\mathcal{T}$ with a c.e. set of theorems, $\exists k\forall x{:}|x|{>}k$, $\mathcal{T}\,{\centernot{\sststile{}{}}}`x{\in}R$'.
\end{theorem}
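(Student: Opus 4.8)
The plan is to run the standard self-referential argument that underlies Chaitin's theorem, adapted to the definition of $R$ fixed in the Preliminaries. Suppose for contradiction that there is no such $k$: that is, for every $k$ there exists a string $x$ with $|x| > k$ such that $\mathcal{T}\,{\sststile{}{}}`x{\in}R$'. First I would build a program $q$ that takes as input a natural number $n$ (in binary), enumerates the theorems of $\mathcal{T}$ — possible since $\mathcal{T}$ has a c.e.\ set of theorems — and halts, outputting the first string $x$ it finds for which the sentence `$x{\in}R$' appears in the enumeration and $|x| > n$. Under the contradiction hypothesis such an $x$ always exists, so $q$ halts on every input $n$; and since $\mathcal{T}$ is arithmetically sound, the sentence `$x{\in}R$' is true, so genuinely $x \in R$, meaning no program of length $\le |x|/2$ outputs $x$ on $U$.

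Next I would do the counting that produces the contradiction. Let $c$ be the constant (independent of $n$) such that, via the universal machine $U$, the program "$q$ applied to input $n$" can be coded by a $U$-program of length at most $|n| + c \le \log n + c + 1$. Choosing $n$ large enough that $\log n + c + 1 \le n/2$ — any $n$ with, say, $n \ge 2(\log n + c + 2)$ works, and all sufficiently large $n$ satisfy this — the string $x = x(n)$ output by this computation has $|x| > n \ge 2(\log n + c + 1)$, hence $|n| + c \le \log n + c + 1 \le |x|/2$. So there is a $U$-program of length $\le |x|/2$ that outputs $x$, contradicting $x \in R$. This contradiction shows the contradiction hypothesis is false, so the desired $k$ exists, and for all $x$ with $|x| > k$ we have $\mathcal{T}\,{\centernot{\sststile{}{}}}`x{\in}R$'.

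The only subtlety — and the step I would be most careful about — is the bookkeeping of the encoding length: one must make sure the constant $c$ absorbing the description of $q$ and the universal-machine overhead truly does not depend on $n$, and that the leading-``1'' binary encoding convention for strings/numbers from the Preliminaries is handled consistently so that "incompressible by half" ($|p| \le |x|/2$) is the inequality actually contradicted. This is routine given a fixed $U$, but it is where an off-by-a-constant slip would hide; everything else (c.e.\ enumeration of theorems, arithmetic soundness giving $x \in R$ from $\mathcal{T}\,{\sststile{}{}}`x{\in}R$', existence of $x$ from the negated conclusion) is immediate. Note soundness, not merely consistency, is used: we need the provable sentence `$x{\in}R$' to be genuinely true in order to conclude $x$ has no short $U$-description.
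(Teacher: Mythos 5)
Your proposal is correct and follows essentially the same argument as the paper: enumerate the theorems of $\mathcal{T}$, output the first sufficiently long $x$ with `$x{\in}R$' provable, and observe that this gives a $U$-description of $x$ of length about $\log n + c \le |x|/2$, contradicting (via soundness) the incompressibility-by-half defining $R$. Your version just spells out the counting and the role of soundness more explicitly than the paper's sketch, which defers the bookkeeping to Li and Vit\'anyi.
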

\begin{proof}
Otherwise, a string $x$ could be concisely described as ``the first string $x$ of length $n$ such that $\mathcal{T}$ proves `$x{\in}R$''', contrary to the definition of $R$. A TM with input $n$ in binary (of length $\log n$) could enumerate the theorems of $\mathcal{T}$, printing the first string $x$ such that $\mathcal{T}$ proves `$x{\in}R$'. Then, $k$ is determined by the length of the description of this TM, which would need to be doubled as $R$ consists of strings not compressible by half. See Li and Vitanyi\cite{LiVitanyiBook} Corollary 2.7.2 for a formal treatment.
\end{proof}

\begin{lemma}\label{lemmahighprobabilityR}
$x{\in}R$ w.h.p.
\end{lemma}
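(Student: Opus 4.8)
The plan is to give a direct counting argument showing that the set $R$ of strings not compressible by half contains all but a vanishing fraction of the strings of each length $n$. First I would fix $n$ and count the strings $x \in S^n$ that are \emph{not} in $R$; by definition such an $x$ admits a program $p$ with $|p| \leq n/2$ and $U(p) = x$. Since $U$ is a deterministic machine, each program $p$ outputs at most one string, so the number of length-$n$ strings with a description of length at most $n/2$ is bounded by the number of binary strings of length at most $n/2$, namely $\sum_{i=0}^{\lfloor n/2\rfloor} 2^i = 2^{\lfloor n/2\rfloor + 1} - 1 < 2^{n/2+1}$.

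Next I would compare this against $|S^n| = 2^n$. The fraction of length-$n$ strings outside $R$ is therefore at most $2^{n/2+1}/2^n = 2^{1 - n/2}$, which tends to $0$ as $n \to \infty$. Equivalently, writing $x$ for a uniformly random length-$n$ string, $\Pr[x \in R] \geq 1 - 2^{1-n/2} \to 1$, which is exactly the assertion that $x \in R$ with high probability in the sense defined in the Preliminaries (an event depending on $n$ that occurs with probability tending to $1$). I would also note in passing that the same bound shows the share of length-$n$ strings in $R$ is bounded above zero (indeed tends to $1$), which is the ingredient needed later when combining with Chaitin's Incompleteness Theorem to conclude that unprovable sentences `$x{\in}R$' are dense.

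There is essentially no obstacle here: the only thing to be careful about is the bookkeeping in the definition of $R$ — namely that ``incompressible by half'' means no program of length $\le |x|/2$ prints $x$, so the counting threshold is $\lfloor n/2 \rfloor$ rather than $n$ — and the trivial observation that determinism of $U$ makes the map $p \mapsto U(p)$ single-valued, so that the count of ``compressible'' strings is at most the count of short programs. One should also remark that the argument is insensitive to the exact compression ratio: replacing ``by half'' with ``logarithmic incompressibility'' (as needed for Theorem \ref{thmnpintermediate}) changes the bound on bad strings to roughly $2^{\log n + O(1)} = O(n)$ out of $2^n$, which still gives probability tending to $1$, although then the complement is sparse rather than merely of vanishing density. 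For the present lemma the half-compression version suffices and the proof is the one-line counting bound above.
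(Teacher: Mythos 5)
Your proposal is correct and follows essentially the same counting argument as the paper: bound the number of strings with a description of length at most $n/2$ by the number of short programs, $2^{\lfloor n/2\rfloor+1}-1$, and compare with $2^n$. Your bound $1-2^{1-n/2}$ is in fact slightly more careful bookkeeping than the paper's stated $1-2^{-n/2}$, but the difference is an irrelevant constant factor and both give the w.h.p.\ conclusion.
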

\begin{proof}
By a counting argument, the number of possible short descriptions is small. The number of length $n$ strings is $2^n$. The number of programs $p$ with $|p|{\leq}n/2$ is $2^{n/2+1}-1$, which is an upper bound on the number of length $n$ strings not in $R$. Therefore, $R$'s share of length $n$ strings is at least $1-2^{-n/2}$, so $x{\in}R$ w.h.p.
\end{proof}

Calude and J{\"u}rgensen's result implies:
\begin{theorem}\label{CaludeJurgensenInformal}
For every theory $\mathcal{T}$, the share of sentences $\{`x{\in}R$'$\,|$ $x{\in}R$ and $\mathcal{T}\,{\centernot{\sststile{}{}}}`x{\in}R$'$\}$ in length $n$ arithmetic sentences is bounded above zero, for $n$ sufficiently large. In an enumeration of sentences, for instance in lexicographic order, unprovable sentences have positive upper density.
\end{theorem}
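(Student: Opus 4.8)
The plan is to run the Calude--J\"urgensen~\cite{CaludeJurgensen} argument in the present setting. Its two ingredients are already available to us: Chaitin's Incompleteness Theorem (Theorem~\ref{chaitinthm}), which says `$x{\in}R$' is unprovable in $\mathcal{T}$ once $|x|$ exceeds a constant, and Lemma~\ref{lemmahighprobabilityR}, which says a $1-2^{-|x|/2}$ fraction of length-$|x|$ strings lie in $R$. What is left is a counting argument, whose only subtle ingredient is a length-efficient encoding of the map sending $x$ to the sentence `$x{\in}R$'.

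First I would fix the arithmetic encoding so that this map has only constant overhead. A length-$m$ string $x$ is placed inside the sentence as the numeral for $1x$ (prepending a leading $1$, as in the Preliminaries, so that no information is lost); since $m=|x|$ is itself recoverable from that numeral as $\lfloor\log_2(1x)\rfloor$, the sentence `$x{\in}R$' --- spelled out, `$\forall p\,(|p|{\le}|x|/2 \to U(p){\ne}x)$' --- is obtained by substituting this one numeral into a single fixed finite template. Hence `$x{\in}R$' has length $m+c_0$ for a constant $c_0$ depending only on the chosen language, so for every $n>c_0$ there are exactly $2^{n-c_0}$ length-$n$ arithmetic sentences of the form `$x{\in}R$', one for each string $x$ of length $n-c_0$.

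Now combine the two bounds. By Lemma~\ref{lemmahighprobabilityR}, at least $2^{n-c_0}(1-2^{-(n-c_0)/2})$ of those strings $x$ satisfy $x{\in}R$; and by Theorem~\ref{chaitinthm}, once $n-c_0$ exceeds Chaitin's threshold $k$, every such $x$ also yields an unprovable sentence `$x{\in}R$'. So for all sufficiently large $n$ the set in the statement contains at least $2^{n-c_0}(1-2^{-(n-c_0)/2})\ge 2^{n-c_0-1}$ sentences of length $n$ --- a fraction at least $2^{-c_0-1}>0$ of the at most $2^n$ length-$n$ arithmetic sentences, hence bounded above zero. The theorem's last assertion, positive upper density of these unprovable sentences in the lexicographic enumeration, then follows from the implication recorded in the Preliminaries, using that the initial segment of the enumeration through length $n$ has size $\Theta(2^n)$ and meets the set in $\Omega(2^n)$ places.

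There is no deep obstacle here once Theorems~\ref{chaitinthm} and~\ref{lemmahighprobabilityR} are in hand; the one point to get right is the encoding in the second step, and how much care it needs depends on how strongly ``bounded above zero'' is read. For the Preliminaries' literal definition ($|L\cap S^n|\ge cn$) essentially any reasonable encoding works, since $R$ already contributes $2^{n-o(n)}$ sentences of each length $n$. For the stronger, more natural reading --- a constant fraction of all length-$n$ sentences --- one genuinely needs the overhead $c_0$ to be $O(1)$ rather than $\Theta(\log|x|)$, which is exactly why the second step takes pains to make $|x|$ recoverable from the embedded numeral instead of writing it out separately. The remaining hypotheses are routine: Theorem~\ref{chaitinthm} is invoked only past its threshold $k$ (harmless, since the bound is claimed only for large $n$) and requires $\mathcal{T}$ to be consistent, arithmetically sound, and c.e.-axiomatized, all of which hold for PA and the sound extensions of it considered here.
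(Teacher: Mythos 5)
Your proposal is correct and follows essentially the same route as the paper's own proof: Chaitin's theorem (Theorem~\ref{chaitinthm}) plus the density of $R$ (Lemma~\ref{lemmahighprobabilityR}), combined with the observation that the encoding of `$x{\in}R$' has only constant overhead $c$ over $|x|$, so these sentences occupy a constant fraction of length-$n$ sentences. Your treatment is merely more explicit about the fixed-template encoding and the final counting than the paper's brief version.
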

\begin{proof}
Theory $\mathcal{T}$ cannot typically prove sentences `$x{\in}R$' where $x{\in}R$, by Theorem \ref{chaitinthm}.
The sentences `$x{\in}R$' satisfy $|`x{\in}R$'$|=|x|+c$, where $c$ is a constant not depending on $|x|$, giving the overhead of encoding `$x{\in}R$' net of $|x|$. The share of length $n$ sentences of form `$x{\in}R$' is exactly $2^{-c}$ and these satisfy $x{\in}R$ w.h.p. Therefore, for $\epsilon{>}0$, this share is bounded below by $2^{-c}{-}\epsilon$ for $n$ sufficiently large. Therefore, in an enumeration of sentences, unprovable sentences have positive upper density.
\end{proof}
The fact that a sentence `$x{\in}R$' needs only a constant $c$ bits of overhead, net of $|x|$, to encode $x{\in}R$ is needed in the next section.
\section{Tautologies and Proof Systems}\label{sectiontautologies}

A tautology can encode the sentence $\mathcal{T}{\centernot{\sststile{}{t}}}`x{\in}R$' as follows. For a given $x$, $\mathcal{T}{\centernot{\sststile{}{t}}}`x{\in}R$' is equivalent to $\langle `x{\in}R$'$,1^t\rangle{\in}\texttt{coTHEOREMS}_{\leq t}$. $\texttt{coTHEOREMS}_{\leq t}$ and $\texttt{TAUT}$ are both  $\textbf{coNP}$-complete languages, so some polynomial-time reduction $r$ from $\texttt{coTHEOREMS}_{\leq t}$ to $\texttt{TAUT}$ maps $\langle \phi,1^t\rangle$ to tautology $r(\langle \phi,1^t\rangle)$.  

Tautologies produced by the reduction $r$ confirm that every possible proof of $\mathcal{T}{\centernot{\sststile{}{t}}}`x{\in}R$' is not a valid proof. The reduction $r$ translates a family of sentences stating that no length $t$ proof exists to a family of tautologies. It should not be confused with propositional translations, which translate sentences with a single universal bounded quantifier that are easy to prove in a weak fragment of arithmetic into easy-to-prove tautologies.\footnote{See Kraj\'{\i}\v{c}ek\cite{Krajicekproof} and Cook and Nguyen\cite{CookNguyen}.}

With this encoding, two implications immediately follow: families of tautologies that are hard to prove have positive upper density in an enumeration of families, and there are dense witnesses to the nonoptimality of proof systems. 

\subsection{Proving Tautologies is Hard with Positive Density}
$R$'s density immediately implies families of tautologies hard to prove have positive upper density in an enumeration of such families. Consider an enumeration of families of Boolean formulas encoding ``no length $t$ proof of $\phi$ exists'', with each family for $\phi$ indexed by $t$, with families enumerated in lexicographic order by $\phi$. Some formulas will not be tautologies, when $\phi$ is provable within length $t$. In this enumeration, families with $\phi$ of the form `$x{\in}R$' where $x{\in}R$ have positive upper density, and these families are typically hard-to-prove tautologies.

This definition does not necessarily imply that length $n$ elements of $\texttt{TAUT}$ are average-case hard to accept. For instance, an algorithm allowed to make errors with small probability can accept for any $\phi$ of the form `$x{\in}R$' and be correct w.h.p. An error-free probabilistic polynomial time algorithm would necessarily fail with non-zero probability. 
\subsection{Dense Witnesses to Nonoptimality}
If there is no optimal proof system, then for any proof system $P$, there is a dense set of hard families of tautologies $r(\langle `x{\in}R$'$,1^t\rangle)$ letting $x$ range over all $x{\in}R$. A probabilistic, polynomial-time computable procedure to produce such a family w.h.p. is to choose a sufficiently long random string $x$. Then, $x{\in}R$ w.h.p. by Lemma \ref{lemmahighprobabilityR}, so tautologies $r(\langle `x{\in}R$'$,1^t\rangle)$ are hard for $P$ w.h.p. Tautologies that are hard for ZFC to prove are also hard for any other known proof system, as their soundness is proved by ZFC. ``Sufficiently long'' is the same as $k$ in Chaitin's theorem, based on the length of the description of a TM that enumerates the theorems of a theory. 
\section{From Turing Intermediate to $\textbf{NP}$ Intermediate}\label{sectionintermediate}
The set $R$ is Turing intermediate---it is not computable, and its complement is c.e. but not complete under many-one computable reductions (Rogers\cite{Rogers} Theorem 8.I(a) and (c)). This raises the question whether Theorem \ref{theoremmain} implies that some related language is $\textbf{NP}$-intermediate---that is, in $\textbf{NP}$, not in $\textbf{P}$, and not $\textbf{NP}$-complete under polynomial time many-one reductions. The final paragraph provides context on $\textbf{NP}$-intermediate languages.

We show that deciding the language ``has no proof of `$x{\in}R$' within length $t$'' is $\textbf{NP}$-intermediate relaxing $R$'s definition to make its complement sparse. This relaxed definition counts strings as random unless they can be compressed exponentially, not just by half. This makes the set of possible short descriptions sparse, growing polynomially in $|x|$, so the the set of non-random strings is also sparse. Define this sparse version of $R$ as $R^{sp}{=}\{x|\forall p{:}$ if $|p|{\leq}\log|x|$, then $p{\nearrow}$ or $p{\downarrow}$ with $U(p){\neq}x\}$. $R^{sp}$, like $R$, is noncomputable. Chaitin's Theorem still holds, but the parameter $k$ is exponentially larger. Fix $\mathcal{S}$ and $\mathcal{T}$ per Theorem \ref{theoremmain}.  $\mathcal{T}{\centernot{\sststile{}{t}}}`x{\in}R^{sp}$' iff $\langle `x{\in}R^{sp}$'$,1^t\rangle{\in}\texttt{coTHEOREMS}_{\leq t}$, by definition. Let $R^{sp}_t{=}\{\langle `x{\in}R^{sp}$'$,1^t\rangle|$$\mathcal{T}{\centernot{\sststile{}{t}}}`x{\in}R^{sp}$'$\}$, so $R^{sp}_t{\in}\texttt{coTHEOREMS}_{\leq t}$. Define $\overline{R^{sp}_t}{=}\{\langle `x{\in}R^{sp}$'$,1^t\rangle|\neg$$\mathcal{T}{\centernot{\sststile{}{t}}}`x{\in}R^{sp}$'$\}$. Based on $x$, $\overline{R^{sp}_t}$ can be divided into $x{\notin}R^{sp}$ where $\langle `x{\in}R^{sp}$'$,1^t\rangle{\in}\overline{R^{sp}_t}$ for all $t$, and $x{\in}R^{sp}$ where $\langle `x{\in}R^{sp}$'$,1^t\rangle{\in}\overline{R^{sp}_t}$ for sufficiently large $t$. Then:

\begin{theorem}\label{thmnpintermediate}
If there is no optimal proof system, then: (i) $\overline{R^{sp}_t}$ is $\textbf{NP}$-intermediate; and (ii) $\overline{R^{sp}_t}$ and therefore $R^{sp}_t$ have minimal circuits in $\textbf{P/poly}$ which are not $\textbf{P}$-uniform.
\end{theorem}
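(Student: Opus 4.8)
The plan is to establish three facts about $\overline{R^{sp}_t}$: (a) membership in $\textbf{NP}$, equivalently $R^{sp}_t\in\textbf{coNP}$; (b) not in $\textbf{P}$, via Theorem \ref{theoremmain}(iii) applied to the unprovable sentences `$x{\in}R^{sp}$'; and (c) not $\textbf{NP}$-complete, via a sparseness argument invoking Mahaney's theorem. For part (ii) I would then derive the $\textbf{P/poly}$ circuit bound from sparseness and the non-$\textbf{P}$-uniformity from part (b).

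First I would verify $R^{sp}_t\in\texttt{coTHEOREMS}_{\leq t}$ (immediate from the definitions, as noted in the excerpt before the theorem statement), so $R^{sp}_t\in\textbf{coNP}$ and hence $\overline{R^{sp}_t}\in\textbf{NP}$: a witness for $\langle `x{\in}R^{sp}$'$,1^t\rangle\in\overline{R^{sp}_t}$ is a length $t$ proof in $\mathcal{T}$ of `$x{\in}R^{sp}$', which is checkable in time polynomial in $t$. Next, for part (b): since $R^{sp}$ is noncomputable (the Chaitin argument of Theorem \ref{chaitinthm} goes through verbatim with the parameter $k$ exponentially larger), there are infinitely many $x$ with `$x{\in}R^{sp}$' true and unprovable in $\mathcal{T}$. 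By Theorem \ref{theoremmain}(iii), for such $\phi=$ `$x{\in}R^{sp}$', $\mathcal{S}$ cannot efficiently rule out length $t$ proofs, which in particular means the language $R^{sp}_t$ (deciding whether $\mathcal{T}\,{\centernot{\sststile{}{t}}}\phi$) is not in $\textbf{P}$; hence $\overline{R^{sp}_t}\notin\textbf{P}$. Here I must be careful to state the instance families precisely: hardness is witnessed by inputs $\langle `x{\in}R^{sp}$'$,1^t\rangle$ with $x$ drawn from the (dense, by the sparse-complement counting bound) set of genuinely random strings, and I would spell out that a $\textbf{P}$ algorithm for $\overline{R^{sp}_t}$ would contradict (iii) on these instances.

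The heart of the argument — and the step I expect to be the main obstacle — is part (c), non-$\textbf{NP}$-completeness. The strategy is: $\overline{R^{sp}_t}$ is sparse, and a sparse $\textbf{NP}$ language cannot be $\textbf{NP}$-complete unless $\textbf{P}=\textbf{NP}$ (Mahaney's theorem), so combined with (b) it is $\textbf{NP}$-intermediate. To show sparseness I would count, among strings of length $n$ of the form $\langle `x{\in}R^{sp}$'$,1^t\rangle$: the element lies in $\overline{R^{sp}_t}$ only if either $x\notin R^{sp}$ — and there are at most polynomially many such $x$ of each length, since there are fewer than $2|x|$ programs $p$ with $|p|\le\log|x|$ — or $\mathcal{T}\,{\sststile{}{t}}$ `$x{\in}R^{sp}$', in which case $x\notin R^{sp}$ anyway by soundness of $\mathcal{T}$. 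Thus for each relevant length the number of yes-instances is bounded by a polynomial in the string length (accounting for the $1^t$ padding and the constant overhead $c$ in $|`x{\in}R^{sp}$'$|=|x|+c$), giving sparseness of $\overline{R^{sp}_t}$. The subtlety to handle carefully is the role of $t$ as part of the input versus as a fixed parameter — I would fix the convention (as the notation $\overline{R^{sp}_t}$ with $1^t$ appearing in tuples suggests) that $t$ is encoded in the instance, so that for each input length only polynomially many $(x,t)$ pairs can appear and at most polynomially many are yes-instances, and then Mahaney applies directly.

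For part (ii): a sparse language is in $\textbf{P/poly}$ (the advice string for length $n$ simply lists the polynomially many yes-instances of that length), so $\overline{R^{sp}_t}$, and hence $R^{sp}_t$, has polynomial-size circuits. These circuits cannot be $\textbf{P}$-uniform, since a $\textbf{P}$-uniform family of polynomial-size circuits deciding $\overline{R^{sp}_t}$ would put it in $\textbf{P}$, contradicting part (b). This completes the plan; the only real content beyond bookkeeping is the sparseness count and the invocation of Mahaney's theorem, and the main care needed is in pinning down the input convention so that "sparse" is literally true and in confirming that Chaitin's theorem (Theorem \ref{chaitinthm}) and Theorem \ref{theoremmain}(iii) apply to `$x{\in}R^{sp}$' exactly as to `$x{\in}R$'.
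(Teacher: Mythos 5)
Your overall route is the same as the paper's: non-membership in $\textbf{P}$ via Theorem \ref{theoremmain}(iii) applied to the unprovable sentences `$x{\in}R^{sp}$' (a step the paper asserts at the same level of detail as you do), sparseness plus Mahaney's theorem for non-$\textbf{NP}$-completeness, sparseness again for the $\textbf{P/poly}$ upper bound, and non-$\textbf{P}$-uniformity because a $\textbf{P}$-uniform family would place the language in $\textbf{P}$. Your additional bookkeeping --- the explicit $\textbf{NP}$-membership argument with a length $t$ proof as witness, and fixing the convention that $t$ is encoded in the instance --- is consistent with what the paper leaves implicit, and the paper's own justification of sparseness is simply that $R^{sp}$ was defined so that its complement is sparse.

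However, one inference in your sparseness count is reversed. You write that if $\mathcal{T}$ proves `$x{\in}R^{sp}$' within length $t$ then ``$x\notin R^{sp}$ anyway by soundness of $\mathcal{T}$''; soundness yields exactly the opposite conclusion, namely $x\in R^{sp}$, so these provable instances are not absorbed by your count of the polynomially many non-random strings per length. The provable case must be bounded separately, and the tool is the one you already invoke for the not-in-$\textbf{P}$ step: Chaitin's theorem in the $R^{sp}$ setting (Theorem \ref{chaitinthm} with the exponentially larger constant $k$) shows $\mathcal{T}$ proves `$x{\in}R^{sp}$' for at most finitely many $x$, namely some of those with $|x|\leq k$, so such instances contribute only $O(1)$ strings per input length. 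With that substitution your sparseness bound, and hence the appeals to Mahaney, the $\textbf{P/poly}$ circuits, and the non-uniformity argument, all go through, and the proof matches the paper's.
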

\begin{proof}
(i) $R^{sp}{\notin}\textbf{P}$ by assumption and Theorem \ref{theoremmain}. $\overline{R^{sp}_t}$ is sparse, as $R^{sp}$ was defined to ensure this. A sparse language is not $\textbf{NP}$-complete under many-one reductions unless $\textbf{P}{=}\textbf{NP}$, which the assumption rules out (Mahaney\cite{Mahaney82}).

(ii) $\overline{R^{sp}_t}$ is sparse, so it has minimal circuits in $\textbf{P/poly}$. These are not $\textbf{P}$-uniform, which would imply $R^{sp}{\in}\textbf{P}$, which does not hold by assumption.
\end{proof}

Ladner\cite{Ladner}  constructed artificial  $\textbf{NP}$-intermediate languages, assuming $\textbf{P}{\neq}\textbf{NP}$. Mahaney showed that a sparse language is not $\textbf{NP}$-complete under many-one reductions unless $\textbf{P}{=}\textbf{NP}$, and under Turing reductions unless $\textbf{PH}$ collapses at the second level. Ogiwara and Watanabe\cite{OgiwaraWatanabeSparseNP} provide a result employing bounded truth table reductions. Homer and Longpr{\'{e}}\cite{HomerLongpreSparse} provide additional results and alternative proofs. 

Allender and Hirahara\cite{AllenderHiraharaNPIntermediate} also provide examples of natural languages that are conditionally $\textbf{NP}$-intermediate. They show that if one-way functions exist, then approximating minimum circuit size and time-bounded Kolmogorov complexity are $\textbf{NP}$-intermediate. Determining whether these problems without approximation are $\textbf{NP}$-hard or not is an area of active research; see for instance Hirahara\cite{HiraharaMCSPNPComplete}.

If no optimal proof system exists, then $\textbf{NEXP}{\neq}\textbf{coNEXP}$ (Kraj\'{\i}\v{c}ek and Pudlák\cite{Krajicek}), and therefore there are sparse languages in $\textbf{NP}$ but not in $\textbf{P}$ (Hartmanis et al\cite{Hartmanissparse}). Our example differs by providing an explicit natural language.

\section{Isomorphisms and Open Questions}\label{sectionisomorphism}
If there is no optimal proof system, there is an implicit mapping from unprovable sentences $\phi$ to families of hard-to-prove sentences ``no length $t$ proof exists of $\phi$''. This mapping can be extended to map provable sentences to families of sentences with a length $t$ proof. If this mapping were onto, it would be an isomorphism. This is an elegant picture---an unnoticed symmetry within mathematics. However, there are several loose ends.

First, the mapping is not onto within the set of all families of  hard-to-prove sentences. Suppose theory $\mathcal{S}$ cannot efficiently prove some family of sentences not of the form ``no length $t$ proof of $\phi$ in $\mathcal{T}$ exists'' and that this family is $\textbf{P}$-uniform. We can make the mapping onto as follows. For each such family hard for $\mathcal{S}$ not in the range of the mapping, there is a sentence unprovable in $\mathcal{S}$ which states ``$\mathcal{S}$ cannot efficiently prove the family``. This is unprovable since $\mathcal{S}$ is consistent by assumption, and $\mathcal{S}$ cannot prove that it has a hard family, as it would prove its own consistency. Therefore, map this unprovable sentence onto the hard family. This extended mapping is onto. A similar solution can address the fact that a mapping from unprovable sentences to families of tautologies encoding ``no length $t$ proof exists'' is not onto.\footnote{Suppose the $\textbf{P}$-uniform family of tautologies $\tau_n$ is hard for proof system $P$ proven sound by theory $\mathcal{S}$ such that the family $\tau_n$ is also hard for $\mathcal{S}$. Then there unprovable sentences in $\mathcal{S}$: ``$\mathcal{S}$ cannot efficiently prove $\tau_n$'' and ``$P$ cannot efficiently prove $\tau_n$''.} A curious interpretation is that the role of hard families of tautologies in proof complexity, with a powerful theory such as ZFC as a proof system, can be fully understood by focusing solely on the role of unprovable sentences in ZFC. Thus, one can understand proof complexity without reference to tautologies.

Second, additional conjectures are needed to extend this question to other open questions. For instance, the conjecture ``no optimal proof system exists $\texttt{TAUT}$'', a $\Pi_1^p$-complete language, and is not strong enough to imply that $\textbf{PH}$ does not collapse. The stronger conjecture ``no optimal proof system exists for a $\Pi_2^p$-complete language, even for a proof system with an oracle for $\texttt{TAUT}$'' implies that $\Pi_2^p{\neq}\Pi_1^p$.\footnote{Chen et al\cite{ChenFlumMuller} show that a $\Pi_2^p$-complete language does not have an optimal proof system if and only if $\texttt{TAUT}$ does not have an optimal proof system, so the reference to an oracle is necessary to separate $\Pi_2^p$ and $\Pi_1^p$.} A version of Theorem \ref{theoremmain}(iii) would hold for $\mathcal{S}$ with a predicate for membership in $\Pi_1$ in the arithmetic hierarchy ($\textbf{AH}$), setting up an isomorphism for sentences with a higher degree of unsolvability.\footnote{See Pudlák\cite{PudlakLogicalFoundations} p. 569 for the construction for $\texttt{TAUT}$.} A set of such conjectures for each level of $\textbf{PH}$ would assert: $\textbf{PH}$ does not collapse due to the existence of unprovable sentences at each level of $\textbf{AH}$. These would assert, elegantly, that $\textbf{PH}$ does not collapse because $\textbf{AH}$ does not collapse.

This suggests a research program could identify a conjecture and implied isomorphism associated with each open question in complexity theory, or identify obstacles to doing so. For instance, the recent flurry of results by Liu and Pass\cite{LiuPass} and others suggest that asserting the hardness of showing $\mathcal{T}{\centernot{\sststile{}{t}}}`x{\in}R^t$', where $R^t$ is defined with respect to time-bounded Kolmogorov complexity, would imply the existence of one-way functions by asserting the average-case hardness of time-bounded Kolmogorov complexity.

To the extent each of these conjectures has the same structure, they can be rolled up into a single overarching conjecture, potentially providing insight into multiple open questions. A very strong conjecture is that some condition of the form in Theorem \ref{theoremmain}(iii) asserts the resolution of most open questions in complexity theory.
\section{Conclusion}\label{conclusion}
The conditions in Theorem \ref{theoremmain} have such strong implications for complexity theory, determining whether they are true and even provable would be desirable. An informal argument is: ruling out length $t$ proofs of an unprovable sentence $\phi$ is hard because the crucial fact is inaccessible that no proof exists of any length. This informal argument seems strongest for sentences $x{\in}R$, which are dense, nonconstructive, and typically impossible to prove. To state this in the most extreme form, suppose no other effectively computable fact about $x{\in}R$ may be useful at all in ruling out length $t$ proofs. In any program ruling out length $t$ proofs of $x{\in}R$, an if-then statement would need to compute a predicate on $R$ to determine which branch to take. Likewise, in any proof doing the same, any case-based reasoning would need to compute a predicate on $R$. However, predicates on $R$ are constrained by the fact that $R$ passes all known and conceivable effective tests of randomness (Li and Vitanyi\cite{LiVitanyiBook} Section 2.4). It is possible that if-then statements and case-based proofs might appear to behave in a purely random manner in ruling out length $t$ proofs of $x{\in}R$. If so, a program or proof can do no better than loops that exhaustively check all cases. 

This constraint might also bind non-uniformly. Boolean circuits and propositional proofs require NOT gates and cancellation to implement conditional logic, such as encoding if-then statements and case-based reasoning. Such circuits and proofs may therefore gain limited benefit their use of NOT gates and cancellation, in line with an old conjecture. It is known that for some monotone Boolean functions, the gap between their non-monotone and monotone circuit complexity (the number of gates in minimal circuits with and without NOT gates respectively) is exponential (Razborov\cite{Razborov1985}, Tardos\cite{Tardos}), and hoped that it is small for some other monotone Boolean functions such as CLIQUE (Razborov\cite{Razborov85}, Alon and Boppana\cite{AlonBoppana}). This conjecture generalized to non-monotone Boolean functions is that for certain functions, the gap is small between their cancellative and non-cancellative circuit complexity is small, where a non-cancellative circuit has a formal polynomial in which no monomial includes both a literal and its negation (Sengupta and Venkateswaran\cite{Sengupta}).\footnote{Shannon's counting argument shows that most Boolean functions require $2^n/n$ gates, the gap between cancellative and non-cancellative circuits for a random Boolean functions cannot be so large as to reduce circuits to polynomial size, as with Tardos' example.} This argument might support a claim that computational tasks such as decryption of small messages are hard in practice and not just asymptotically.
\bibliographystyle{amsplain}
%%input "equivalence.bib"
\bibliography{equivalence}

\end{document}